\newcommand{\inn}[2]{{\left \langle #1 | #2 \right\rangle}}
\newcommand{\T}{\operatorname{Tr}}
\renewcommand{\labelenumi}{(\roman{enumi})}
\theoremstyle{plain}
\newtheorem{theorem}{Theorem}
\newtheorem{lemma}[theorem]{Lemma}
\theoremstyle{definition}
\newtheorem{definition}{Definition}
\theoremstyle{remark}
\newtheorem*{remark}{Remark}
\begin{document}

\title{Stability theorem of depolarizing channels 
\\for the minimal output quantum R\'{e}nyi entropies}

\author{Eunok Bae}
\affiliation{Department of Mathematics and Research Institute for Basic Sciences,
 Kyung Hee University, Seoul 130-701, Korea}
 \affiliation{Institute for Quantum Science and Technology, 
 University of Calgary, Alberta, T2N 1N4, Canada}
\author{Gilad Gour}
\affiliation{Institute for Quantum Science and Technology, 
 University of Calgary, Alberta, T2N 1N4, Canada}
\affiliation{Department of Mathematics and Statistics, 
 University of Calgary, Calgary, Alberta T2N 1N4, Canada} 
\author{Soojoon Lee}
\affiliation{Department of Mathematics and Research Institute for Basic Sciences,
 Kyung Hee University, Seoul 130-701, Korea}
\author{Jeonghoon Park}
\affiliation{Department of Mathematics and Research Institute for Basic Sciences,
 Kyung Hee University, Seoul 130-701, Korea}
\author{Barry C. Sanders}
\affiliation{Institute for Quantum Science and Technology, 
 University of Calgary, Alberta, T2N 1N4, Canada}
\affiliation{Department of Physics and Astronomy, 
 University of Calgary, Calgary, Alberta T2N 1N4, Canada} \affiliation{Program in Quantum Information Science, Canadian Institute for Advanced Research,
 Toronto, Ontario, M5G 1Z8, Canada}
 \affiliation{%
Hefei National Laboratory for Physical Sciences at Microscale and Department of Modern Physics, University of Science and Technology of China, Hefei, Anhui 230026, China}
\affiliation{%
Shanghai Branch, CAS Center for Excellence and Synergetic Innovation Center in Quantum Information and Quantum Physics, University of Science and Technology of China, Shanghai 201315, China%
}

\date{\today}

\begin{abstract}
We show that the stability theorem of the depolarizing channel holds 
for the output quantum $p$-R\'{e}nyi entropy for $p \ge 2$ 
or $p=1$, 
which is an extension of the well known case $p=2$.
As an application, we present a protocol in which Bob determines 
whether Alice prepares a pure quantum state close to a product state.
In the protocol, Alice transmits to Bob multiple copies of a pure state
through a depolarizing channel, and Bob
estimates its output quantum $p$-R\'{e}nyi entropy.
By using our stability theorem, we show that Bob can determine whether her preparation is appropriate.
\end{abstract}

\pacs{
03.67.-a  
}

\maketitle

\section{Introduction}
\label{sec:Intro}
We extend the stability theorem of the depolarizing channel 
to the output quantum $p$-R\'{e}nyi entropy 
for $p \ge 2$ or $p=1$. 
The original stability theorem with the output purity is essentially equivalent to our stability theorem for the case $p=2$ 
and was used in proving the equality 
$\mathrm{QMA}(k)=\mathrm{QMA}(2)$ for all $k \ge 2$~\cite{HM13}. 
We generalize it to the output quantum $p$-R\'{e}nyi entropy 
to create a more powerful tool,
and we apply it to a type of polygraph test as discussed below.
Generalization is accomplished by defining the notion of stability of a quantum channel with respect to any real valued continuous function.
That is, if a state is close to achieving 
the minimal/maximal output value of a particular quantity (entropy function) through the channel,
then it must be close to an input state giving the minimal/maximal value.
In particular, we show that the depolarizing channel is stable 
with respect to the output quantum R\'{e}nyi entropy.

Our theorem is constructed by generalizing the Taylor expansion of von~Neumann entropy~\cite{GF13} to the quantum R\'{e}nyi entropy.
Whereas the original work employed the output purity~\cite{HM13},
which is a relatively simpler function,
we use a more general (and complicated) function,
namely the output quantum R\'{e}nyi entropy.
The Taylor expansion of the output quantum R\'{e}nyi entropy is the technique we use
to prove the stability theorem for a depolarizing channel with respect to the the output quantum R\'{e}nyi entropy.
The protocol is described in \S\ref{sec:Poly} 
and provides us meaning and intuition for 
the stability theorem of the depolarizing channel.
Furthermore, the protocol shows that our stability theorem has a benefit 
as our protocol has a smaller undecidable gap than the original case. 

We organize our paper as follows. 
In \S\ref{sec:Stable}, we provide some notions to define a stable channel clearly.
Our main result appears in \S\ref{sec:Stability}
where we generalize the Taylor expansion of the von~Neumann entropy 
to calculate the Taylor expansions of the quantum R\'{e}nyi entropies.
We use this result to show that the depolarizing channel is stable with respect to the output quantum $p$-R\'{e}nyi entropies for $p \ge 2$ or $p=1$. 
In \S\ref{sec:Poly}, 
we introduce a polygraph test as an application of our stability theorem. 
Finally, in \S\ref{sec:Conclusions}, we conclude with discussion on our results.

\section{Stable channels}
\label{sec:Stable}
In this section, we define the notions of 
a {\em quantity}, an {\em extremal} state, an {\em~$\epsilon$-almost extremal} state, 
an {\em~$\epsilon$-stable} channel, and a {\em stable} channel. 
Except for this section, in the rest of the paper, we will be using these notions for only 
the depolarizing channel and the quantum R\'{e}nyi entropy as a quantity.

\begin{definition}
Let 
$\mathcal{E}: \mathcal{B}(\mathcal{H}_{\text i}) \rightarrow \mathcal{B}(\mathcal{H}_{\text o})$ 
be a quantum channel (i.e., a trace preserving completely positive map) and let $Q$ be a real-valued continuous function on $\mathcal{B}(\mathcal{H}_{\text o})$, 
where $\mathcal{B}(\mathcal{H}_{\text i})$ and $\mathcal{B}(\mathcal{H}_{\text o})$ are the sets of all states in the input space $\mathcal{H}_{\text i}$ and the output space $\mathcal{H}_{\text o}$, respectively.

For any $\epsilon>0$, 
a state $\sigma \in \mathcal{B}(\mathcal{H}_{\text i})$ is {\em~$\epsilon$-almost extremal}
with respect to the function $Q$ and the channel $\mathcal{E}$ 
if
\begin{equation}
	\left| Q\left(\mathcal{E}(\sigma)\right)-\operatorname{ext}_\rho Q\left( \mathcal{E} \left(\rho \right) \right) \right|
		\in O(\epsilon),
\end{equation} 
where the extremal value, $\operatorname{ext}_\rho$, refers to 
either the maximal value or the minimal value of $Q$ over all states $\rho$ in $\mathcal{B}(\mathcal{H}_{\text i})$ 
according to a given quantity.
A state $\sigma_0$ is said to be {\em extremal} 
with respect to $Q$ and $\mathcal{E}$ if 
\begin{equation}
	Q\left(\mathcal{E}(\sigma_0)\right)=\operatorname{ext}_\rho Q \left( \mathcal{E} \left(\rho\right) \right).
\end{equation}
\end{definition}
Now we define a stable channel with respect to the function $Q$.
\begin{definition}
For a given $\epsilon>0$, a channel $\mathcal{E}$ is {\em~$\epsilon$-stable} 
with respect to a quantity $Q$
if, for all $\sigma$~$\epsilon$-almost extremal, 
an extremal state $\sigma_0$ exists with respect to $Q$ and $\mathcal{E}$ such that
\begin{equation}
	\left\| \sigma- \sigma_0 \right\|^2_1 \in O(\epsilon),
\end{equation}
where $\left\| \cdot \right\|_1$ denotes the trace norm.
A channel $\mathcal{E}$ is {\em stable} with respect to a quantity $Q$
if it is~$\epsilon$-stable with respect to the quantity $Q$ 
for all $\epsilon>0$.
\end{definition}

We have provided some generalized definitions to establish the notion of a stable channel.
In the next section, as our main result, we present the stability theorem of the depolarizing channel for the output quantum R\'{e}nyi entropy and prove that the depolarizing channel is stable with respect to the quantum R\'{e}nyi entropy.

\section{Stability of the depolarizing channel}
\label{sec:Stability}
In this subsection, we present and prove the stability theorem of the depolarizing channel with respect to the output quantum R\'{e}nyi entropy. This section consists of two subsections. In the first subsection, we evaluate the Taylor expansion of the quantum R\'{e}nyi entropy which is crucial to prove our main theorem in the second subsection.

\subsection{The Taylor expansion of the quantum $p$-R\'{e}nyi entropy}
\label{subsec:Taylor}
In this subsection, 
we the Taylor expansion technique for the von~Neumann entropy~\cite{GF13}
to calculate the Taylor expansion of the quantum $p$-R\'{e}nyi entropy.
This technique is key to prove the stability theorem of the depolarizing channel for the output quantum $p$-R\'{e}nyi entropy.

For $p > 0$ ($p\neq1$),
the quantum $p$-R\'{e}nyi entropy~\cite{H3} of a state~$\rho$ is
\begin{equation}
	S_p\left(\rho\right):=\frac{1}{1-p} \log\T\rho^{p}.
\label{eq:Renyi}
\end{equation}
The minimal output quantum $p$-R\'{e}nyi entropy of a quantum channel $\mathcal{E}$ 
is defined as
\begin{equation}
	S_p^{\min}\left(\mathcal{E}\right):=\min_\rho S_p\left(\mathcal{E}(\rho)\right),
\label{eq:minRenyi}
\end{equation}
where the minimum is taken over all input states $\rho$ of $\mathcal{E}$.
The quantum $p$-R\'{e}nyi entropy converges to the von~Neumann entropy as $p$ tends to one, 
and we can thus consider the quantum R\'{e}nyi entropy 
as a generalization of the von~Neumann entropy~\cite{H3}. 

In order to obtain the Taylor expansion of the quantum R\'{e}nyi entropy, 
we exploit the following lemma.
\begin{lemma}[Gour and Friedland~\cite{GF13}]
\label{lem:DividedDifference}
	Let $A=diag\left(p_1,\cdots,p_m\right) \in \mathbb{C}^{m \times m}$ be a diagonal square matrix, and $B=\left[b_{ij} \right] \in \mathbb{C}^{m \times m}$ be a complex square matrix. 
Let $f$ be a $C^2$ function defined on a real open interval $\left(a,b\right)$.
Then
\begin{equation}
	f\left(A+tB\right)
		=f\left(A\right)+t L_A\left(B\right)+t^2 Q_A\left(B\right)+O\left(t^3\right)
\label{eq:Lemma_GF}
\end{equation}
for $L_A : \mathbb{C}^{m \times m} \rightarrow \mathbb{C}^{m \times m}$ a linear operator
and $Q_A:\mathbb{C}^{m \times m} \rightarrow \mathbb{C}^{m \times m} $ 
a quadratic homogeneous non-commutative polynomial in $B$. 
For $i,j=1,\cdots,m$, we have
\begin{align}
&\left[L_A\left(B\right)\right]_{ij}
=\Delta f\left(p_i, p_j\right)b_{ij}
=\frac{f\left(p_i\right)-f\left(p_j\right)}{p_i-p_j}b_{ij}, 
\nonumber \\
&\left[Q_A\left(B\right)\right]_{ij}
=\sum_{k=1}^m\Delta^2 f\left(p_i, p_k, p_j\right)b_{ik}b_{kj}.
\label{eq:lemma_GF1}
\end{align}
In particular,
\begin{align}
&\T\left(L_A\left(B\right)\right)
=\sum_{j=1}^m f^\prime\left(p_j\right)b_{jj}, 
\nonumber \\
&\T\left(Q_A\left(B\right)\right)
=\sum_{i,j=1}^m \frac{f^\prime\left(p_i\right)-f^\prime\left(p_j\right)}
{2\left(p_i-p_j\right)}b_{ij}b_{ji}.
\label{eq:lemma_GF2}
\end{align}
\end{lemma}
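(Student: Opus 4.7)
The plan is to prove the expansion in three stages: first for monomials by direct expansion, then for arbitrary polynomials by linearity, and finally for general $C^2$ functions by a density argument. The trace identities will then drop out of the entry-wise formulas by a short manipulation exploiting the symmetry of divided differences.

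First I would take $f(x)=x^n$ and expand
\[
(A+tB)^n = A^n + t\sum_{k=0}^{n-1} A^{n-1-k} B A^k + t^2\!\!\sum_{0\le i'<j'\le n-1}\!\! A^{n-1-j'} B A^{j'-i'-1} B A^{i'} + O(t^3).
\]
Since $A$ is diagonal, the $(i,j)$-entry of the linear coefficient equals $b_{ij}\sum_{k=0}^{n-1} p_i^{n-1-k} p_j^k = b_{ij}(p_i^n-p_j^n)/(p_i-p_j) = \Delta f(p_i,p_j)\,b_{ij}$, with the natural value $n p_i^{n-1} b_{ii}$ when $p_i=p_j$. Reindexing the double sum by $a=i'$, $b=j'-i'-1$, $c=n-1-j'$ identifies the $(i,j)$-entry of the quadratic coefficient with $\sum_k b_{ik} b_{kj} \sum_{a+b+c=n-2} p_i^c p_k^b p_j^a$, which is exactly $\sum_k \Delta^2 f(p_i,p_k,p_j)\,b_{ik} b_{kj}$ by the standard complete-homogeneous-polynomial representation of the second divided difference of $x^n$. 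Linearity in $f$ then extends both identities to all polynomials.

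Next I would pass to general $C^2$ functions. Fix a compact interval $I\subset(a,b)$ containing the spectrum of $A+tB$ for all $|t|\le\delta$ (possible by continuity of eigenvalues in $t$), and approximate $f$ on $I$ by polynomials $P_k$ with $P_k$, $P_k'$, and $P_k''$ converging uniformly to $f$, $f'$, and $f''$; such a sequence exists, e.g., via mollification followed by polynomial approximation. Both sides of the expansion are continuous functionals of $(f,f',f'')$ in this topology---the right-hand side through the values of $f$, $f'$, $f''$ at the $p_i$ entering the divided differences, the left-hand side through spectral calculus applied to $P_k(A+tB)\to f(A+tB)$---so the identity passes to the limit. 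The linear trace formula is then immediate, while for the quadratic trace the symmetry of $b_{ik}b_{ki}$ under $i\leftrightarrow k$ allows the replacement of $\Delta^2 f(p_i,p_k,p_i)$ by $\tfrac12[\Delta^2 f(p_i,p_k,p_i)+\Delta^2 f(p_k,p_i,p_k)]$, which a short divided-difference calculation identifies with $[f'(p_i)-f'(p_k)]/[2(p_i-p_k)]$.

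The main obstacle is controlling the remainder uniformly in the approximation step: under a strict $C^2$ hypothesis Taylor's theorem delivers only an $o(t^2)$ error, so the stated $O(t^3)$ bound is most naturally obtained under $C^3$ regularity and a uniform bound on $P_k'''$. For the intended application $f(x)=x^p$, however, the function is smooth on any interval bounded away from the origin, so this technical point causes no issue in the sequel.
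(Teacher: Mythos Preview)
The paper does not prove this lemma at all: it is quoted verbatim from Gour and Friedland~\cite{GF13} and used as a black box, so there is no ``paper's own proof'' to compare against. Your argument is therefore not a reconstruction of anything in this paper but an independent proof of the cited result.

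As a standalone proof your approach is sound. The monomial computation, the identification of the linear and quadratic coefficients with first and second divided differences, and the symmetrization step for the trace formula (where $\Delta^2 f(p_i,p_k,p_i)+\Delta^2 f(p_k,p_i,p_k)=\bigl(f'(p_i)-f'(p_k)\bigr)/(p_i-p_k)$) are all correct. Your caveat about the remainder is also well placed: with only a $C^2$ hypothesis the expansion is a priori $o(t^2)$ rather than $O(t^3)$, and the stated $O(t^3)$ really wants $C^3$ regularity---which, as you note, is available for $f(x)=x^p$ on intervals bounded away from zero and hence harmless for the paper's purposes.
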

Now we use Lemma~\ref{lem:DividedDifference} 
to calculate the Taylor expansion of $S_p\left(\rho\left(t\right)\right)$.
\begin{theorem}
\label{thm:TER}
A nonsingular density matrix
\begin{equation}
	\rho\left(t\right)=\rho+ t \gamma_0+t^2 \gamma_1+ O(t^3),
\end{equation}
with~$\rho$ diagonal, $\gamma_0$ all zeroes along the diagonal 
and~$\gamma_1$ having zero trace,
has quantum $p$-R\'{e}nyi entropy
\begin{align}
	S_p\left(\rho\left(t\right)\right)
=&S_p\left(\rho\right) +\frac{1}{1-p}t^2\left(p\frac{\T\left(\rho^{p-1}\gamma_1\right)}{\T\left(\rho^{p}\right)}+\frac{\T\left(Q_{\rho}\left(\gamma_0\right)\right)}{\T\left(\rho^{p}\right)}
\right)+O\left(t^3\right).
\label{eq:Salphat}
\end{align}
\end{theorem}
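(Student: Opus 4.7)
The plan is to apply Lemma~\ref{lem:DividedDifference} with $f(x)=x^p$ to get a matrix-valued Taylor expansion of $\rho(t)^p$ about $\rho$, take the trace to obtain $\T\rho(t)^p$, and then expand the logarithm in $S_p=\frac{1}{1-p}\log\T\rho^p$.

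First, since $\rho$ is nonsingular and $p>0$, the scalar function $f(x)=x^p$ is $C^2$ on an open interval containing the spectrum of $\rho$ (and of $\rho(t)$ for $t$ small), so Lemma~\ref{lem:DividedDifference} applies. Writing the full perturbation as $X(t):=t\gamma_0+t^2\gamma_1+O(t^3)$, the second-order matrix Taylor expansion reads
\begin{equation*}
\rho(t)^p=\rho^p+L_\rho(X(t))+Q_\rho(X(t))+O(\|X(t)\|^3).
\end{equation*}
Because $L_\rho$ is linear in its argument, $Q_\rho$ is a homogeneous quadratic polynomial in its argument, and the cross contributions of $Q_\rho$ between $t\gamma_0$ and $t^2\gamma_1$ are already $O(t^3)$, this collapses to
\begin{equation*}
\rho(t)^p=\rho^p+tL_\rho(\gamma_0)+t^2\bigl(L_\rho(\gamma_1)+Q_\rho(\gamma_0)\bigr)+O(t^3).
\end{equation*}

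Next I would take the trace using the identities in~\eqref{eq:lemma_GF2}. The coefficient of $t$ vanishes because $\T L_\rho(\gamma_0)=\sum_j f'(p_j)(\gamma_0)_{jj}=0$ by the assumption that $\gamma_0$ has zero diagonal. With $f'(x)=px^{p-1}$ and $\rho$ diagonal, $\T L_\rho(\gamma_1)=p\sum_j p_j^{p-1}(\gamma_1)_{jj}=p\,\T(\rho^{p-1}\gamma_1)$, so
\begin{equation*}
\T\rho(t)^p=\T\rho^p+t^2\bigl(p\,\T(\rho^{p-1}\gamma_1)+\T Q_\rho(\gamma_0)\bigr)+O(t^3).
\end{equation*}

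Finally, pulling $\T\rho^p$ outside the logarithm and applying $\log(1+u)=u+O(u^2)$ with $u=O(t^2)$, then multiplying by $1/(1-p)$, yields exactly~\eqref{eq:Salphat}. The argument is essentially an organized unpacking of Lemma~\ref{lem:DividedDifference}, so there is no deep obstacle; the main thing to be careful about is tracking which pieces of $L_\rho(X(t))$ and $Q_\rho(X(t))$ sit at each order in $t$, and observing that the tracelessness of $\gamma_1$ plays no role in the explicit $t^2$ coefficient itself (it is only needed to ensure $\T\rho(t)=1+O(t^3)$, so that $\rho(t)$ is a valid density matrix to the relevant order).
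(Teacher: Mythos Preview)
Your argument is correct and reaches the same endpoint as the paper, but by a shorter path. The paper first uses the binomial series $\rho(t)^p=[\mathds{1}-(\mathds{1}-\rho(t))]^p=\sum_n\binom{p}{n}(-1)^n(\mathds{1}-\rho(t))^n$ to peel off the $\gamma_1$ contribution, obtaining $\T\rho(t)^p=\T\sigma(t)^p+pt^2\T(\rho^{p-1}\gamma_1)+O(t^3)$ with $\sigma(t)=\rho+t\gamma_0$, and only then applies Lemma~\ref{lem:DividedDifference} in its literal form (fixed perturbation direction $B=\gamma_0$) to expand $\T\sigma(t)^p$. You instead read Lemma~\ref{lem:DividedDifference} as the Fr\'echet--Taylor expansion $f(\rho+X)=f(\rho)+L_\rho(X)+Q_\rho(X)+O(\|X\|^3)$ and substitute the full $X(t)=t\gamma_0+t^2\gamma_1+O(t^3)$ directly, using linearity of $L_\rho$ and quadratic homogeneity of $Q_\rho$ to sort terms. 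This is cleaner and avoids the auxiliary binomial step entirely; the paper's route has the minor advantage of invoking the lemma exactly as stated, whereas yours tacitly uses that the $O(t^3)$ remainder in the lemma is uniform over bounded $B$ (equivalently, that $L_\rho$, $Q_\rho$ are the first and second Fr\'echet derivatives). For $f(x)=x^p$ with $\rho$ nonsingular this is immediate since $f$ is smooth on a neighbourhood of the spectrum, so there is no real gap. Your remark about the role of $\T\gamma_1=0$ is also accurate.
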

\begin{remark}
As $p$ tends to one, 
Theorem~\ref{thm:TER} 
implies the Taylor expansion of the von~Neumann entropy, 
hence generalizes the von~Neumann entropy.
\end{remark}
\begin{proof}
As $\rho$ is nonsingular, $\mathds{1}-\rho\left(t\right) < \mathds{1}$ for small~$t$. 
Thus, we can employ the Taylor expansion with respect to~$t$. 
From the following Taylor expansion
\begin{equation}
	\rho^{p}\left(t\right)
		=\left[\mathds{1}-\left(\mathds{1}-\rho\left(t\right)\right)\right]^{p}
=\sum_{n=0}^{\infty} \binom{p}{n}\left(-1\right)^n\left(\mathds{1}-\rho\left(t\right)\right)^n,
\label{eq:TER1}
\end{equation}
we obtain
\begin{equation}
	\T\left(\rho^{p}\left(t\right)\right)
		=\sum_{n=0}^{\infty} \binom{p}{n}\left(-1\right)^n
\T\left[\left(\mathds{1}-\rho\left(t\right)\right)^n\right].
\label{eq:TER2}
\end{equation}
Expanding the trace term in the right-hand side of Eq.~(\ref{eq:TER2}) 
up to second order in~$t$ yields
\begin{equation}
	\T\left[\left(\mathds{1}-\rho\left(t\right)\right)^n\right]
		=\T\left[\left(\mathds{1}-\sigma\left(t\right)\right)^n\right]-t^2 n 
\T\left[\left(\mathds{1}-\rho\right)^{n-1}\gamma_1\right]+O\left(t^3\right),
\label{eq:TER3}
\end{equation}
where $\sigma\left(t\right)=\rho+t\gamma_0$.
From Eq.~(\ref{eq:TER2}) and Eq.~(\ref{eq:TER3}),
\begin{equation}
\T\left(\rho^{p}\left(t\right)\right)
=\T\left(\sigma^{p}\left(t\right)\right)
+p t^2 \T\left( \rho^{p-1}\gamma_1\right)
+O\left( t^3\right).
\label{eq:TER4}
\end{equation}
As Lemma~\ref{lem:DividedDifference} yields the equality
\begin{equation}
	\T\left(\sigma^{p}\left(t\right)\right)
		=\T\left( \rho^{p}\right)+t p \T\left(\rho^{p-1}\gamma_0\right)
			+t^2 \T\left( Q_{\rho}\left( \gamma_0\right)\right)+O\left( t^3\right),
\label{eq:TER5}
\end{equation}
and $\gamma_0$ is zero along the diagonal, 
we obtain
\begin{equation}
	\T\left(\rho^{p}\left(t\right)\right)
		=\T\left(\rho^{p}\right)+t^2\left( p \T\left(\rho^{p-1}\gamma_1\right)
			+\T\left( Q_{\rho}\left( \gamma_0\right)\right)\right)+O\left( t^3\right).
\label{eq:TER6}
\end{equation}
Using the Taylor expansion of the logarithm function,
\begin{equation}
\log\left(1+x\right)=\sum_{n=1}^{\infty} (-1)^{n+1}\frac{x^n}{n},
\label{eq:TER7}
\end{equation}
we obtain
\begin{align}
	\log\T\left(\rho^{p}\left(t\right)\right)
		=&\log\left[\T\left(\rho^{p}\right)\left( 1+ t^2\left( p \frac{\T
			\left(\rho^{p-1}\gamma_1\right)}{\T\left(\rho^{p}\right)}
				+\frac{\T\left(Q_{\rho}\left(\gamma_0\right)\right)}
					{\T\left(\rho^{p}\right)}\right) +O\left( t^3\right) \right) \right]
						\nonumber\\
		=&\log\T\left(\rho^{p}\right)+ t^2\left( p \frac{\T\left(\rho^{p-1}
			\gamma_1\right)}{\T\left(\rho^{p}\right)}
				+\frac{\T\left(Q_{\rho}\left(\gamma_0\right)\right)}{\T\left(\rho^{p}\right)}\right)
					+O\left( t^3\right).
\label{eq:TER8}			
\end{align}
Therefore, by definition of the quantum $p$-R\'{e}nyi entropy in Eq.~(\ref{eq:Renyi}), 
the equality~(\ref{eq:Salphat}) can be readily obtained from Eq.~(\ref{eq:TER8}).
This completes the proof.
\end{proof}
We have evaluated the Taylor expansion of the quantum R\'{e}nyi entropy.
In the next subsection, we use this result to prove the stability theorem of the depolarizing channel for the output quantum R\'{e}nyi entropy.

\subsection{The stability theorem of the depolarizing channel for the output quantum R\'{e}nyi entropy}
\label{subsec:Stability}

In this subsection, we prove our main theorem, 
namely the stability theorem of the output quantum $p$-R\'{e}nyi entropy for the depolarizing channel 
for $p \ge 2$. 
First, we present the following lemma, 
which is crucial to prove the theorem. 
\begin{lemma}
\label{lem:f_p(x)}
For $p \ge 2$, $r> 1$ and $d \ge 2$, 
\begin{equation}
	 f_p(x)
	 	:= \frac{p}{1-p}\left[\left(\frac{({r}^{x})^{p-1}-1}{r^x-1}\right)
			\left(\frac{(r-1)^2}{(d+r-1)^2(r^p+(d-1))}\right)^{x}
+\left(\frac{r^{p-1}+r+(d-2)}{r^p+(d-1)}\right)^{x} -1 \right]
\label{eq:falphax}
\end{equation}
is monotonically increasing on $[2,\infty)$.
\end{lemma}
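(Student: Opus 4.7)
The plan is to factor the bracketed expression so that each summand has manifest monotonicity, and then to invoke the negativity of $p/(1-p)$ for $p>1$ to flip the direction. Introduce the abbreviations
\[
A(x):=\frac{r^{x(p-1)}-1}{r^x-1},\qquad
B:=\frac{(r-1)^2}{(d+r-1)^2(r^p+(d-1))},\qquad
C:=\frac{r^{p-1}+r+(d-2)}{r^p+(d-1)},
\]
so that $f_p(x)=\tfrac{p}{1-p}\bigl[A(x)B^x+C^x-1\bigr]$. It then suffices to show $g(x):=A(x)B^x+C^x-1$ is strictly decreasing on $(0,\infty)$; the restriction to $[2,\infty)$ follows a fortiori.

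First I would dispose of the $C^x$ summand by a direct calculation: $1-C=(r-1)(r^{p-1}-1)/(r^p+(d-1))$, which is strictly positive for $r>1$ and $p\ge 2$, so $0<C<1$ and $C^x$ is strictly decreasing in $x$.

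The main step is the $A(x)B^x$ term. Here the key trick is the integral identity $r^y-1=y(\ln r)\int_0^1 r^{ys}\,ds$; applying it with $y=x(p-1)$ and with $y=x$ and canceling the common factor $x\ln r$, one obtains the representation
\[
A(x)B^x=\frac{\int_0^{p-1}(Br^s)^x\,ds}{\int_0^1 r^{xs}\,ds}.
\]
Provided $Br^{p-1}<1$, the base $Br^s$ lies in $(0,1)$ for every $s\in[0,p-1]$, so the integrand $(Br^s)^x$ of the numerator is positive and strictly decreasing in $x$; hence the numerator is positive and strictly decreasing in $x$. The denominator equals $(r^x-1)/(x\ln r)$, which is positive and strictly increasing on $(0,\infty)$ by an elementary check. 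A positive decreasing function divided by a positive increasing one is decreasing, so $A(x)B^x$ is decreasing. Combining with the previous paragraph, $g$ is strictly decreasing, and multiplying by the negative factor $p/(1-p)$ gives the desired monotonic increase of $f_p$.

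The only genuinely nontrivial ingredient is therefore the inequality $Br^{p-1}<1$, i.e.\ $r^{p-1}(r-1)^2<(d+r-1)^2(r^p+d-1)$. Using $d\ge 2$ one has $(d+r-1)^2\ge(r+1)^2$ and $r^p+d-1>r^p$, so the right-hand side exceeds $(r+1)^2\,r^p$ and the inequality reduces to $r(r+1)^2>(r-1)^2$, equivalently $r^3+r^2+3r-1>0$, which is immediate for $r\ge 1$. This is the single pressure point of the argument; everything else is routine calculus and algebra.
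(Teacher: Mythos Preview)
Your argument is correct and genuinely different from the paper's. The paper differentiates $f_p$ directly, obtaining
\[
f'_p(x)=A_p\Bigl[B_p\log\!\tfrac{r^p+d-1}{r^{p-1}(r-1)^2}+C_p\log r\Bigr],
\]
with $A_p\le 0$ and $C_p\le B_p\le 0$ (the latter via a Bernoulli-type inequality), and then bounds the bracket below by $B_p$ times a single logarithm that is checked to be positive. You instead split $g(x)=A(x)B^x+C^x-1$ into two pieces and prove each is monotone: $C^x$ via the identity $1-C=(r-1)(r^{p-1}-1)/(r^p+d-1)>0$, and $A(x)B^x$ via the integral representation
\[
A(x)B^x=\frac{\int_0^{p-1}(Br^s)^x\,ds}{\int_0^1 r^{xs}\,ds},
\]
reducing everything to the single algebraic inequality $Br^{p-1}<1$. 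Your route avoids computing $f_p'$ symbolically and makes the monotonicity structurally transparent; the paper's route keeps things elementary (no integrals) but requires spotting the $C_p\le B_p$ comparison. Both ultimately hinge on a ``$(r-1)^2$ is dominated'' inequality, though yours, $r^{p-1}(r-1)^2<(d+r-1)^2(r^p+d-1)$, is a bit cruder (and easier) than the paper's $r^{p-1}(r-1)^2<r^{p+1}+(d-1)r$. A small bonus of your approach is that it yields monotonicity on all of $(0,\infty)$, not just $[2,\infty)$, though the paper's proof in fact does too.
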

\begin{remark}
Let  $\ket{\psi}$ be an $n$-qudit pure state satisfying $\left| \inn{\psi}{\phi}\right|^{2}=1-t^2$
for an $n$-qudit product state $\ket{\phi}$. Then the function $f_p$~(\ref{eq:falphax}) is the coefficient of the second order term 
in the Taylor expansion of $S_p\left(\mathcal{D}_{\lambda}^{\otimes n}\ket{\psi}\bra{\psi}\right)$.

\end{remark}
\begin{proof}
\label{proof:f_p(x)}
Observe that
\begin{equation}
f'_p(x)=A_p\left[ B_p \log\left(\frac{r^p+d-1}{r^{p-1}(r-1)^2}\right)+C_p\log r \right],
\label{eq:f_p0}
\end{equation}
where
\begin{align}
A_p=&\frac{p}{1-p}\frac{(r-1)^2}{(r^x-1)^2(r^p+d-1)}, \nonumber \\
B_p=&(r^x-1)(1-(r^x)^{p-1}), \nonumber \\
C_p=&-(r^x)^p+p(r^x-1)+1. 
\label{eq:f_p1}
\end{align}
As $r>1$, straightforward calculations yield $A_p \leq 0$, $C_p \leq B_p \leq 0$ for $p \geq 2$.
Thus, we obtain the inequality 
\begin{align}
	f'_p(x) 
		\ge& A_p \left[ B_p\log\left(\frac{r^p+d-1}{r^{p-1}(r-1)^2}\right)+B_p \log r \right] 
		\nonumber\\
		=& A_p B_p \log\left(\frac{r^{p+1}+(d-1)r}{r^{p-1}(r-1)^2}\right).	
\label{eq:f_p2}
\end{align}
Here the right-hand side of the inequality (\ref{eq:f_p2}) is clearly nonnegative 
as the inequality
\begin{equation}
	\log\left(\frac{r^{p+1}+(d-1)r}{r^{p-1}(r-1)^2}\right) >0
	\label{eq:f_p3}
\end{equation}
		can be easily proved due to the inequality
\begin{equation}
	2r^{p-1}-r^{p-2}+(d-1) >0. 
	\label{eq:f_p4}
\end{equation}
Therefore, the function $f_p(x)$ is monotonically increasing.
\end{proof}

We now present one more lemma,
which tells us that the minimal output quantum $p$-R\'{e}nyi entropy of the depolarizing channel is achieved for product state inputs. 
%
The lemma can be readily obtained 
by the additivity of the minimal output quantum $p$-R\'{e}nyi entropy~\cite{Kin03}.

\begin{lemma}
\label{lem:minRenyi}
For the $n$-partite product depolarizing channel~$\mathcal{D}_{\lambda}^{\otimes n}$,
the quantum $p$-R\'{e}nyi entropy of the output state is minimized for product state inputs
and furthermore has the same value for all product state inputs;
that is, 
\begin{equation}
	S_p^{\min}\left( \mathcal{D}_{\lambda}^{\otimes n}\right)
		=S_p\left(\mathcal{D}_{\lambda}^{\otimes n}\ket{\phi}\bra{\phi}\right),
		\label{eq:lemma_product}
\end{equation}	
for any $n$-partite pure product state $\ket{\phi}$.
\end{lemma}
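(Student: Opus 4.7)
The plan is to reduce the statement to two standard ingredients: King's additivity theorem for the depolarizing channel, and the unitary covariance of $\mathcal{D}_\lambda$ combined with additivity of the Rényi entropy under tensor products.

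First I would invoke King's result~\cite{Kin03}, which asserts that the minimal output $p$-Rényi entropy of the depolarizing channel is additive under tensor products, so
\begin{equation}
S_p^{\min}\left(\mathcal{D}_{\lambda}^{\otimes n}\right) = n\, S_p^{\min}\left(\mathcal{D}_{\lambda}\right).
\end{equation}
This is the only non-trivial input; the rest of the argument is a matter of unpacking this equality.

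Next I would compute $S_p^{\min}(\mathcal{D}_\lambda)$ explicitly using unitary covariance. For any unitary $U$, one has $\mathcal{D}_\lambda(U\rho U^\dagger)=U\mathcal{D}_\lambda(\rho)U^\dagger$, so the output spectrum depends only on the spectrum of the input. In particular, every pure input $\ket{\psi}$ produces an output whose spectrum is $\{\lambda+(1-\lambda)/d,(1-\lambda)/d,\dots,(1-\lambda)/d\}$, and thus $S_p(\mathcal{D}_\lambda\ket{\psi}\bra{\psi})$ is the same value $s_p$ for every pure $\ket{\psi}$. Because $S_p$ is concave in its argument, the map $\rho\mapsto S_p(\mathcal{D}_\lambda(\rho))$ is concave and attains its minimum on the extreme points of the state space, namely on pure states, so $S_p^{\min}(\mathcal{D}_\lambda)=s_p$.

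Finally, I would combine these two ingredients. For any $n$-partite pure product state $\ket{\phi}=\ket{\phi_1}\otimes\cdots\otimes\ket{\phi_n}$, the output factorises as
\begin{equation}
\mathcal{D}_{\lambda}^{\otimes n}\ket{\phi}\bra{\phi} = \bigotimes_{k=1}^{n}\mathcal{D}_{\lambda}\ket{\phi_k}\bra{\phi_k},
\end{equation}
and additivity of $S_p$ under tensor products yields
\begin{equation}
S_p\left(\mathcal{D}_{\lambda}^{\otimes n}\ket{\phi}\bra{\phi}\right)=\sum_{k=1}^{n} S_p\left(\mathcal{D}_{\lambda}\ket{\phi_k}\bra{\phi_k}\right)=n s_p = n S_p^{\min}(\mathcal{D}_\lambda)=S_p^{\min}\left(\mathcal{D}_{\lambda}^{\otimes n}\right),
\end{equation}
which gives~\eqref{eq:lemma_product} and simultaneously shows that the value is the same for every pure product input. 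The main obstacle in principle is the additivity result itself, but since we are permitted to cite~\cite{Kin03} the proof becomes a short verification; the remaining subtlety is just tracking that concavity of $S_p$ forces the single-copy minimum to be attained at pure states rather than at some mixed input.
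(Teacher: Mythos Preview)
Your approach coincides with the paper's: the authors do not give a proof at all, merely noting that the lemma ``can be readily obtained by the additivity of the minimal output quantum $p$-R\'enyi entropy~\cite{Kin03}.'' Your write-up simply unpacks that citation, which is fine.

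One genuine slip: you justify that the single-copy minimum is attained at pure states by asserting that $S_p$ is concave. That is false for $p>1$ (concavity of the R\'enyi entropy holds only for $0\le p\le 1$). The conclusion is still correct, but the reasoning should instead go through the convexity of $\sigma\mapsto\T(\sigma^p)$ for $p\ge 1$: composing with the affine map $\mathcal{D}_\lambda$ preserves convexity, so $\rho\mapsto\T\!\bigl(\mathcal{D}_\lambda(\rho)^p\bigr)$ is convex and is therefore maximised at extreme points, which for $p>1$ is equivalent to $S_p\circ\mathcal{D}_\lambda$ being minimised at pure states. With that fix the argument is complete.
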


We now use Theorem~\ref{thm:TER} and the above lemmas 
to obtain the stability theorem of the output quantum $p$-R\'{e}nyi entropy 
for the depolarizing channel.
%
%
\begin{lemma}
\label{lemma:STRenyi}
Let $p \ge 2$, $\epsilon>0$ and $|\psi\rangle\in(\mathbb{C}^d)^{\otimes n}$ be a state. Then
\begin{equation}
	S_p\left(\mathcal{D}_{\lambda}^{\otimes n}\ket{\psi}\bra{\psi}\right)
		< S_p^{\min}\left( \mathcal{D}_{\lambda}^{\otimes n}\right)
			+2\epsilon\frac{p}{p-1}
				\frac{r-1}{r+1}
					\frac{(r^{p-1}-1)(2r^{p}+dr+d-2)}{(r^{p}+d-1)^2}
					+ O(\epsilon^{3/2})
\label{eq:True}
\end{equation}
holds only if a pure product state~$\ket{\phi}$ exists such that~$\ket{\psi}$ satisfies
\begin{equation}
\label{eq:close}
	\left|\left\langle\psi\right|\phi\rangle\right|^{2} \ge 1-\epsilon.
\end{equation}
\end{lemma}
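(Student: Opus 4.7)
The plan is to argue by contrapositive: assume every $n$-qudit product state $\ket{\phi}$ satisfies $|\inn{\psi}{\phi}|^2<1-\epsilon$, and deduce a lower bound on $S_p(\mathcal{D}_{\lambda}^{\otimes n}\ket{\psi}\bra{\psi})$ that violates~(\ref{eq:True}). The case $n=1$ is vacuous, since every pure single-qudit state is itself a product state, so one may assume $n\ge 2$.

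First, I would pick a product state $\ket{\phi}$ maximizing $|\inn{\psi}{\phi}|^2$ (attained by compactness of the set of product states) and decompose
\begin{equation*}
\ket{\psi}=\sqrt{1-t^2}\,\ket{\phi}+t\,\ket{\phi^\perp},\qquad \inn{\phi}{\phi^\perp}=0,
\end{equation*}
with $t\ge 0$, so that the contrapositive hypothesis forces $t^2>\epsilon$. I then embed this into the smooth one-parameter family $\ket{\psi(s)}=\sqrt{1-s^2}\ket{\phi}+s\ket{\phi^\perp}$ for $s\in[0,t]$, in order to bring the Taylor machinery of Theorem~\ref{thm:TER} to bear.

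Next, I apply Theorem~\ref{thm:TER} to $\rho(s):=\mathcal{D}_{\lambda}^{\otimes n}\ket{\psi(s)}\bra{\psi(s)}$ around $s=0$. The unperturbed state $\rho(0)=\mathcal{D}_{\lambda}^{\otimes n}\ket{\phi}\bra{\phi}$ is full rank (hence nonsingular) and is diagonal in a tensor product basis in which the cross operator $\ket{\phi}\bra{\phi^\perp}+\ket{\phi^\perp}\bra{\phi}$ is off-diagonal. Writing $\rho(s)=\rho+s\gamma_0+s^2\gamma_1+O(s^3)$, one checks that $\gamma_0$ has zero diagonal and $\gamma_1$ is traceless, so Theorem~\ref{thm:TER} gives
\begin{equation*}
S_p(\rho(s))=S_p^{\min}(\mathcal{D}_{\lambda}^{\otimes n})+s^2\,c_p(n)+O(s^3),
\end{equation*}
where the zeroth-order term is $S_p^{\min}$ by Lemma~\ref{lem:minRenyi} and the first-order term vanishes because $\gamma_0$ has no diagonal in the eigenbasis of $\rho$.

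The heart of the proof is then to show that the second-order coefficient $c_p(n)$ coincides with the function $f_p(n)$ of~(\ref{eq:falphax}), as indicated by the remark following Lemma~\ref{lem:f_p(x)}. This requires diagonalizing each local depolarized output $\mathcal{D}_{\lambda}\ket{\phi_i}\bra{\phi_i}$ (whose spectrum is one nondegenerate eigenvalue in ratio $r$ to $d-1$ equal eigenvalues) and summing the divided-difference contributions $p\,\T(\rho^{p-1}\gamma_1)/\T(\rho^p)$ and $\T(Q_{\rho}(\gamma_0))/\T(\rho^p)$ from~(\ref{eq:Salphat}) over the resulting tensor product spectrum. Once $c_p(n)=f_p(n)$ is established, Lemma~\ref{lem:f_p(x)} supplies $f_p(n)\ge f_p(2)$ for $n\ge 2$, and evaluating at $s=t$ with $t^2>\epsilon$ yields
\begin{equation*}
S_p(\mathcal{D}_{\lambda}^{\otimes n}\ket{\psi}\bra{\psi})\ge S_p^{\min}(\mathcal{D}_{\lambda}^{\otimes n})+\epsilon\, f_p(2)+O(\epsilon^{3/2}),
\end{equation*}
which contradicts~(\ref{eq:True}) after an algebraic simplification of $f_p(2)$ to the compact closed form displayed in~(\ref{eq:True}).

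The main obstacle is exactly the identification $c_p(n)=f_p(n)$: bookkeeping of the tensor product block structure combined with the divided-difference formulas~(\ref{eq:lemma_GF2}) is unavoidable, and recognizing the resulting double sum as the compact expression~(\ref{eq:falphax}) demands a careful regrouping of terms. The subsequent reduction of $f_p(2)$ to $2\tfrac{p}{p-1}\tfrac{r-1}{r+1}\tfrac{(r^{p-1}-1)(2r^p+dr+d-2)}{(r^p+d-1)^2}$ is routine but tedious.
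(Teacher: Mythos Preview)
Your contrapositive setup and use of Theorem~\ref{thm:TER} match the paper, but the identification $c_p(n)=f_p(n)$ is wrong, and this is not a bookkeeping issue --- it is a genuine gap.

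The second-order coefficient does not depend only on the number of qudits $n$. If you expand your $\ket{\phi^\perp}=\sum_{x\neq 0}\alpha_x\ket{x}$ in the product eigenbasis of $\rho=\mathcal{D}_\lambda^{\otimes n}\ket{\phi}\bra{\phi}$ and carry out the divided-difference computation, what actually emerges (this is Eq.~(\ref{eq:main_Thm8}) in the paper) is
\[
S_p(\rho(t))-S_p(\rho)=t^2\sum_{x\neq 0}|\alpha_x|^2\,f_p(|x|)+O(t^3),
\]
where $|x|$ is the \emph{Hamming weight} of $x$. The argument of $f_p$ is the weight of each component of $\ket{\phi^\perp}$, not $n$; the remark following Lemma~\ref{lem:f_p(x)} is loosely phrased and should not be read as asserting $c_p(n)=f_p(n)$.

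Consequently, monotonicity of $f_p$ alone does not give $\sum_x|\alpha_x|^2 f_p(|x|)\ge f_p(2)$: a priori $\ket{\phi^\perp}$ could have weight-one components, and the sum would then only be bounded below by $f_p(1)$. The missing ingredient --- and the one place where the choice of $\ket{\phi}$ as the \emph{maximizing} product state is actually used beyond setting up the perturbation --- is the observation (due to Harrow--Montanaro) that maximality of $|\inn{\psi}{\phi}|^2$ forces $\alpha_x=0$ for all $|x|=1$. Only then does every surviving term satisfy $|x|\ge 2$, so that Lemma~\ref{lem:f_p(x)} yields $\sum_x|\alpha_x|^2 f_p(|x|)\ge f_p(2)$ and the argument closes. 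Your proposal never invokes this fact, so as written the lower bound you reach is $\epsilon\,f_p(1)+O(\epsilon^{3/2})$, which does not contradict~(\ref{eq:True}).
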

\begin{proof}
We prove the contrapositive of the theorem. 
Let
\begin{equation}
\epsilon_0 
= 1-\max\left\{ \left| \left\langle\psi\right| \phi_{1},\cdots,\phi _{n}\rangle\right|^{2}:
			\left| \phi_{i}\right\rangle \in \mathbb{C}^d\right\}>\epsilon.
\label{eq:MaxFidelity}
\end{equation} 
Without loss of generality, 
we may assume that 
one of the states achieving the maximum in Eq.~(\ref{eq:MaxFidelity}) is $\ket{0^n}=\ket{0}$. 
We then have
\begin{equation}
	\ket{\psi}=\sqrt{1-\epsilon_0}\ket{0}+\sqrt{\epsilon_0}\ket{\phi}
	\label{eq:main_Thm1}
\end{equation}
for some state $\ket{\phi}$ such that $\inn{0}{\phi}=0$;
that is, $\ket{\phi}=\sum_{x\neq0}\alpha_x\ket{x}$ 
for some $\alpha_x $ such that $\sum_{x\neq0}\left|\alpha_x\right|^2=1$.
We can write explicitly
\begin{equation}
\ket{\psi}\bra{\psi}
=\left(1-\epsilon_0\right)\ket{0}\bra{0}+\sqrt{\epsilon_0\left(1-\epsilon_0\right)}
\left( \ket{0}\bra{\phi}+\ket{\phi}\bra{0}\right)+\epsilon_0\ket{\phi}\bra{\phi}.
\label{eq:main_Thm2}
\end{equation}
Therefore, we have
\begin{align}
	\mathcal{D}_{\lambda}^{\otimes n}\ket{\psi}\bra{\psi}
		=& \mathcal{D}_{\lambda}^{\otimes n}
			\ket{0}\bra{0}+\sqrt{\epsilon_0}\sqrt{1-\epsilon_0}~\mathcal{D}_{\lambda}^{\otimes n}
				\left(\ket{0}\bra{\phi}+\ket{\phi}\bra{0}\right)
					+\epsilon_0~\mathcal{D}_{\lambda}^{\otimes n}\left(\ket{\phi}\bra{\phi}
						-\ket{0}\bra{0}\right) \nonumber\\
		=& \mathcal{D}_{\lambda}^{\otimes n}
			\ket{0}\bra{0}+\sqrt{\epsilon_0}~\mathcal{D}_{\lambda}^{\otimes n}
				\left(\ket{0}\bra{\phi}+\ket{\phi}\bra{0}\right)
					+\epsilon_0~\mathcal{D}_{\lambda}^{\otimes n}\left(\ket{\phi}\bra{\phi}
						-\ket{0}\bra{0}\right)+O( \epsilon_0^{3/2} ).
\label{eq:main_Thm3}
\end{align}
For the last equality in Eq.~(\ref{eq:main_Thm3}), 
we use the Taylor expansion
\begin{equation}
	\sqrt{1-x}=1-\frac{1}{2}x+O\left( x^2\right)
\end{equation}
for all $|x|<1$.
Now we use 
\begin{align}
\rho &= \mathcal{D}_{\lambda}^{\otimes n}\ket{0}\bra{0}, \nonumber \\
\gamma_0 &= \mathcal{D}_{\lambda}^{\otimes n}\left(\ket{0}\bra{\phi}+\ket{\phi}\bra{0}\right), \nonumber \\
\gamma_1&= \mathcal{D}_{\lambda}^{\otimes n}\left(\ket{\phi}\bra{\phi} -\ket{0}\bra{0}\right), \nonumber \\
t&=\sqrt{\epsilon_0},  \nonumber \\
\rho\left(t\right) &= \rho+t \gamma_0+t^2 \gamma_1+O\left(t^3\right). \nonumber \\
\end{align}
Then Theorem~\ref{thm:TER} implies that
\begin{align}
	S_p\left(\mathcal{D}_{\lambda}^{\otimes n}\ket{\psi}\bra{\psi}\right) 
		=&S_p\left( \rho\left(t\right)\right)
						\nonumber\\
		=&S_p\left(\rho\right)+\frac{1}{1-p}t^2
			\left(p\frac{\T\left(\rho^{p-1}\gamma_1\right)}
				{\T\left(\rho^{p}\right)}+\frac{\T\left(Q_{\rho}\left(\gamma_0\right)\right)}
					{\T\left(\rho^{p}\right)}\right)
						+O\left(t^3\right).
\label{eq:RentropyPerturbation}
\end{align}
For convenience, we let
\begin{equation}
	a=\left(1+(d-1)\lambda\right)/d,\;
	b=(1-\lambda)/d.
\end{equation}
Then we obtain the following three facts.
\begin{enumerate}
	\item	As $\rho=\mathcal{D}_{\lambda}^{\otimes n}\ket{0}\bra{0}$ can be rewritten as
$\sum_{y} a^{n-|y}b^{|y|}\ket{y}\bra{y}$, 
	\begin{equation}
		\T\left( \rho^{p}\right)
			=\T\left(\sum_{y}\left(a^{p}\right)^{n-|y|}\left(b^{p}\right)^{|y|}\ket{y}\bra{y}\right)
			=\left( a^p+(d-1)b^p\right)^n,
	\end{equation}
	for~$|y|$ denoting Hamming weight of an $n$-bit string~$y$. 
	\item We can be evaluate
	\begin{align}
		\T\left(\gamma_1\rho^{p-1}\right) 
		=& \T\left(\mathcal{D}_{\lambda}^{\otimes n}
			\ket{\phi}\bra{\phi}\right)\left( \mathcal{D}_{\lambda}^{\otimes n}
				\ket{0}\bra{0}\right)^{p-1} -\T\left( \rho^p\right) \nonumber\\
		=& \sum_{x,x' \neq 0,y } \alpha_x {\alpha}_{x'}^*
			\left(a^{p-1}\right)^{n-|y|}\left( b^{p-1}\right)^{|y|}
				\T\left( \left(\mathcal{D}_{\lambda}^{\otimes n}
					\ket{x}\bra{x'}\right)\ket{y}\bra{y}\right) -\T\left( \rho^p\right)\nonumber\\
		=& \sum_{x \neq 0,y} |\alpha_x|^2\left(a^{p-1}\right)^{n-|y|}\left( b^{p-1}\right)^{|y|} \prod_{i=1}^n \frac{1-\lambda(1-d)^{\delta_{x_i y_i}}}{d}-\T\left( \rho^p\right) \nonumber\\
		=& \sum_{x \neq 0} |\alpha_x|^2\left( a^p+(d-1)b^p\right)^n
		\left(\frac{ a^{p-1}b+ab^{p-1}+(d-2)b^p}{a^p+(d-1)b^p}\right)^{|x|}
		-\T\left( \rho^p\right).
		\label{eq:main_Thm4}
\end{align}
	\item For $n$-bit strings $j$ and $k$, let 
	\begin{equation}
		g_{jk}
			:= \frac{\left(a^{n-|j|}b^{|j|}\right)^{p-1}-\left(a^{n-|k|}b^{|k|}\right)^{p-1}}{2\left(a^{n-|j|}b^{|j|}-a^{n-|k|}b^{|k|}\right)}.
	\label{eq:main_Thm5}
\end{equation}
Then we write
\begin{align}
	\T\left( Q_\rho\left( \gamma_0\right)\right) 
		=& p\left( \sum_{jk} g_{jk}\right) \left|\left( \mathcal{D}_{\lambda}^{\otimes n}
			\ket{0}\bra{\phi}\right)_{jk}\right|^2 
			\label{eq:main_Thm6} \\
		=& p \sum_{x \neq 0} |{\alpha}_x|^2 
			\left(\frac{(a^{|x|})^{p-1}-(b^{|x|})^{p-1}}{a^{|x|}-b^{|x|}}\right)
				(a^p+(d-1)b^p)^{n-|x|}.
				\label{eq:main_Thm7}
\end{align}
\end{enumerate}
Here $Q_\rho$~(\ref{eq:main_Thm6}) is a polynomial defined in Eq.~(\ref{eq:lemma_GF1}), and 
all equalities can be proved by tedious but straightforward calculations 
except the last equality in Eq.~(\ref{eq:main_Thm4}), 
which can be shown by mathematical induction on $n$.

Combining the above facts, we have
\begin{equation}
	S_p\left(\rho\left(t\right)\right)-S_p\left(\rho\right)
		=  t^2 \sum_{x \neq 0} |{\alpha_x}|^2 f_p(|x|)+O(t^3)
		\label{eq:main_Thm8}
\end{equation}
with 
\begin{align}
	f_p(|x|)
		=&\frac{p}{1-p}
			\left(\frac{(a^{|x|})^{p-1}-(b^{|x|})^{p-1}}{a^{|x|}-b^{|x|}}\right)
				\left(\frac{\lambda^2}{a^p+(d-1)b^p}\right)^{|x|}
							\nonumber\\
					&+\frac{p}{1-p}\left[\left(\frac{a^{p-1}b
						+ab^{p-1}+(d-2)b^p}{a^p+(d-1)b^p}\right)^{|x|} -1\right].
						\label{eq:main_Thm9}
\end{align}
Here it can be shown that 
the function $f_p$~(\ref{eq:main_Thm9}) is equal to 
the function $f_p$ defined in Eq.~(\ref{eq:falphax})
by taking $r=a/b$.  
Hence, Lemma~\ref{lem:f_p(x)} implies that 
$f_p(|x|)$ is monotonically increasing on $|x|$ for all $p \geq 2$.
As $\ket{\phi}$ does not have any weight-one components~\cite{HM13};
that is, $\alpha_x=0$ for $|x|< 2$, from Theorem \ref{thm:TER} and Lemma \ref{lem:minRenyi},
we can finally obtain the following inequality, 
and thereby complete the proof.
\begin{align}
	S_p\left(\mathcal{D}_{\lambda}^{\otimes n}\ket{\psi}\bra{\psi}\right)-
		S_p^{\min}\left(\mathcal{D}_{\lambda}^{\otimes n}\right)
			=&S_p\left(\rho\left(t\right)\right)-S_p\left(\rho\right) 
							\nonumber\\
			\ge& \epsilon_0 f_p(2)+O(\epsilon_0^{3/2})
\label{eq:ineq}
							\nonumber\\
			>&\epsilon \frac{p}{p-1}
				\left[\frac{2\lambda(1-\lambda)(a^{p-1}-b^{p-1})
					(2(a^{p}-b^{p})+db^{p-1}(a+b))}
						{(2+(d-2)\lambda)(a^{p}+(d-1)b^p )^2}\right] +O(\epsilon^{3/2})
							\nonumber\\
		=& 2\epsilon\frac{p}{p-1}
			\frac{(r-1)(r^{p-1}-1)(2r^{p}+dr+d-2)}{(r+1)(r^{p}+d-1)^2}
				+O(\epsilon^{3/2}).
\end{align}
\end{proof}
\begin{remark}
Although we have not yet established the stability theorem for $1<p<2$,
we can show that it still holds for $p=1$;
that is, the stability theorem for the von~Neumann entropy of the depolarizing channel holds
by using a similar method to what we have used.
(We have numerically checked that the same result holds 
for the several cases of $p$ with $1<p<2$.)
Let us see the proof for the case of $p=1$.

We calculate the Taylor expansion of the von~Neumann entropy 
to get the difference between the output von~Neumann entropy and its minimum value as follows.
\begin{equation}
S\left(\mathcal{D}_{\lambda}^{\otimes n}\ket{\psi}\bra{\psi}\right)
-S^{\min}\left(\mathcal{D}_{\lambda}^{\otimes n}\right)
=\epsilon\sum_{x \neq 0} |{\alpha_x}|^2 f(|x|)+O(\epsilon^{3/2}),
\label{eq:output_von_Neumann_entropy}
\end{equation}
where the function $f$ is defined as
\begin{equation}
f(|x|)= |x|(a-b) \log \frac{a}{b}-\left(a-b\right)^{2|x|}\left(\frac{\log a^{|x|}-\log b^{|x|}}{a^{|x|}-b^{|x|}}\right),
\end{equation}
which is the limit of $f_p(|x|)$ for $p$ tends to one. 
As it is easier than Lemma~\ref{lem:f_p(x)} 
to prove that $f(|x|)$ is monotonically increasing, 
we can easily obtain the almost same result as 
our stability theorem.
\end{remark}

\begin{theorem}
\label{thm:stability_new}
The depolarizing channel is stable with respect to the output quantum $p$-R\'{e}nyi entropy 
for $p \ge 2$ or $p=1$.
\end{theorem}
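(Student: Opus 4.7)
The plan is to obtain Theorem~\ref{thm:stability_new} as an essentially direct corollary of Lemma~\ref{lemma:STRenyi} combined with Lemma~\ref{lem:minRenyi}, plus the fidelity--trace-distance identity for pure states. Fix $\epsilon>0$ and let $\sigma$ be an arbitrary $\epsilon$-almost extremal input to $\mathcal{D}_{\lambda}^{\otimes n}$. Since the extremum of $S_p(\mathcal{D}_{\lambda}^{\otimes n}(\cdot))$ under consideration is the minimum, the $\epsilon$-almost extremal condition reads
\begin{equation}
S_p\bigl(\mathcal{D}_{\lambda}^{\otimes n}(\sigma)\bigr)-S_p^{\min}\bigl(\mathcal{D}_{\lambda}^{\otimes n}\bigr)\in O(\epsilon).
\end{equation}

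First I would reduce to pure inputs. Because the minimum output $p$-R\'enyi entropy of $\mathcal{D}_{\lambda}^{\otimes n}$ is attained on pure states, a convex decomposition argument shows that any mixed $\epsilon$-almost extremal $\sigma$ admits a pure-state component $\ket{\psi}\bra{\psi}$ in its support that is itself $\epsilon$-almost extremal; it then suffices to produce an extremal $\sigma_{0}$ close to $\ket{\psi}\bra{\psi}$ and invoke the triangle inequality. So I would henceforth take $\sigma=\ket{\psi}\bra{\psi}$.

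Second, I would apply the contrapositive of Lemma~\ref{lemma:STRenyi}. The coefficient of $\epsilon$ on the right-hand side of Eq.~(\ref{eq:True}) is an explicit positive constant $C(p,\lambda,d)$ depending only on the channel and on $p$. Given the $O(\epsilon)$-bound on the entropy gap, rescaling $\epsilon$ by a factor depending only on $C$ and absorbing the $O(\epsilon^{3/2})$ correction yields a pure product state $\ket{\phi}$ with $|\inn{\psi}{\phi}|^{2}\ge 1-\epsilon'$, where $\epsilon'\in O(\epsilon)$. By Lemma~\ref{lem:minRenyi}, any such pure product state is extremal, so setting $\sigma_{0}=\ket{\phi}\bra{\phi}$ produces a legitimate extremal state.

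Third, I would convert fidelity to trace distance via the standard identity
\begin{equation}
\bigl\|\ket{\psi}\bra{\psi}-\ket{\phi}\bra{\phi}\bigr\|_{1}=2\sqrt{1-|\inn{\psi}{\phi}|^{2}},
\end{equation}
which gives $\|\sigma-\sigma_{0}\|_{1}^{2}\le 4\epsilon'\in O(\epsilon)$. This exhibits the channel as $\epsilon$-stable in the sense of the definition in \S\ref{sec:Stable}, and since $\epsilon$ was arbitrary, stability follows. For the $p=1$ case, I would replace the appeal to Lemma~\ref{lemma:STRenyi} with its von~Neumann analogue sketched in the remark preceding the theorem; the structure of the argument is identical, the only change being that the limiting function $f$ there is checked to be monotonically increasing by a simpler calculation than that of Lemma~\ref{lem:f_p(x)}.

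The main obstacle is essentially notational rather than technical: all the genuinely hard work (the Taylor expansion of $S_p$, the monotonicity of $f_p$, and the extraction of the $|x|\ge 2$ contribution) has already been done in Lemmas~\ref{lem:f_p(x)}, \ref{lem:minRenyi}, and \ref{lemma:STRenyi}. The one place where care is required is keeping track of the implicit constants in the $O(\epsilon)$ notation so that the $\epsilon$ appearing in the definition of $\epsilon$-almost extremal and the $\epsilon$ appearing in Eq.~(\ref{eq:close}) of Lemma~\ref{lemma:STRenyi} are matched up correctly, and verifying that the reduction from mixed to pure inputs in the first step does not inflate the trace-distance bound beyond $O(\epsilon)$.
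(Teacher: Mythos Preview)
Your proposal is correct and follows essentially the same route as the paper: invoke Lemma~\ref{lemma:STRenyi} (and its $p=1$ analogue from the remark) on an $\epsilon$-almost extremal input, then convert the fidelity bound $|\inn{\psi}{\phi}|^{2}\ge 1-\epsilon$ into the trace-norm bound required by the stability definition, noting via Lemma~\ref{lem:minRenyi} that the resulting product state is extremal. The paper's own proof is in fact briefer than yours---it simply starts from a pure $\epsilon$-almost extremal state without the mixed-to-pure reduction and jumps straight to the trace-norm conclusion---so your added care about mixed inputs and constant tracking goes beyond what the paper records.
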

\begin{proof}
Let $\epsilon>0$ be given,
and let the quantity $Q$ be~$S_p$.
Let $\ket{\psi}\bra{\psi}$ be
an~$\epsilon$-almost extremal state with respect to~$S_p$ and $\mathcal{D}_{\lambda}^{\otimes n}$.
Then
\begin{equation}
	\left|S_p\left(\mathcal{D}_{\lambda}^{\otimes n}\ket{\psi}\bra{\psi}\right)
		-S_p^{\min}\left(\mathcal{D}_{\lambda}^{\otimes n}\right)\right|
			\in O(\epsilon),
\end{equation}
by Lemma~\ref{lemma:STRenyi}, 
there exists some extremal state $\ket{\phi}\bra{\phi}$ 
with respect to~$S_p$ and $\mathcal{D}_{\lambda}^{\otimes n}$ such that
\begin{equation}
\left\| \ket{\psi}\bra{\psi}-\ket{\phi}\bra{\phi} \right\|^2_1
\le \epsilon.
\end{equation}
Thus, the depolarizing channel $\mathcal{D}_{\lambda}^{\otimes n}$ is~$\epsilon$-stable with respect to the quantum $p$-R\'{e}nyi entropy~$S_p$
for any $\epsilon>0$, 
and hence it is stable with respect to~$S_p$.
\end{proof}

\begin{remark}
	We obtain the stability theorem of the depolarizing channel with respect to the output purity~\cite{HM13}
	as a corollary of Theorem~\ref{thm:stability_new}.
\end{remark}

Furthermore, we can similarly show that if $n$-qudit pure state is close to be product 
then its output quantum $p$-R\'{e}nyi entropy is close to 
the minimal output quantum $p$-R\'{e}nyi entropy with a specific precision as follows.  
\begin{theorem}
\label{theorem:STRenyi2}
Let $p \ge 2$, $\epsilon>0$ and $|\psi\rangle\in(\mathbb{C}^d)^{\otimes n}$ be a state. Then
\begin{equation}
\label{eq:False}
	S_p\left(\mathcal{D}_{\lambda}^{\otimes n}\ket{\psi}\bra{\psi}\right)
		\geq S_p^{\min}\left( \mathcal{D}_{\lambda}^{\otimes n}\right)
			+\epsilon\frac{p}{p-1}+ O(\epsilon^{3/2})
\end{equation}
implies 
\begin{equation}
\left| \inn{\psi}{\phi}\right|^{2}<1-\epsilon
\label{eq:False2}
\end{equation}
 for any product state $|\phi\rangle$.
\end{theorem}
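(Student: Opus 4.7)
The plan is to prove the contrapositive: assume there exists an $n$-partite pure product state $\ket{\phi}$ with $\left|\inn{\psi}{\phi}\right|^{2}\ge 1-\epsilon$, and deduce that
\[
S_p\!\left(\mathcal{D}_{\lambda}^{\otimes n}\ket{\psi}\bra{\psi}\right) < S_p^{\min}\!\left(\mathcal{D}_{\lambda}^{\otimes n}\right)+\epsilon\frac{p}{p-1}+O(\epsilon^{3/2}).
\]
The setup will mirror the proof of Lemma~\ref{lemma:STRenyi}. Setting $\epsilon_0 = 1-\max_{\ket{\phi_1},\dots,\ket{\phi_n}}\left|\inn{\psi}{\phi_1,\cdots,\phi_n}\right|^2 \le \epsilon$ and assuming without loss of generality that the maximum is attained at $\ket{0^n}$, I would write $\ket{\psi}=\sqrt{1-\epsilon_0}\ket{0^n}+\sqrt{\epsilon_0}\ket{\phi}$ with $\ket{\phi}=\sum_{x\neq 0}\alpha_x\ket{x}$.

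Next, I would invoke precisely the same computation that produced equation~(\ref{eq:main_Thm8}) — this step does not use any sign information and gives the exact identity
\[
S_p\!\left(\mathcal{D}_{\lambda}^{\otimes n}\ket{\psi}\bra{\psi}\right)-S_p^{\min}\!\left(\mathcal{D}_{\lambda}^{\otimes n}\right) = \epsilon_0\sum_{x\neq 0}|\alpha_x|^2 f_p(|x|)+O(\epsilon_0^{3/2}),
\]
where $f_p$ is the function of Eq.~(\ref{eq:falphax}) with $r=a/b$. Whereas Lemma~\ref{lemma:STRenyi} used monotonicity to lower-bound the sum by $f_p(2)$, here I need a uniform upper bound. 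The key new ingredient is to compute $\lim_{x\to\infty}f_p(x)$: with $r>1$ and $p\ge 2$ the ratio $(r^{p-1}+r+d-2)/(r^p+d-1)$ satisfies $r^{p-1}+r+d-2-(r^p+d-1)=(r-1)(1-r^{p-1})<0$, so that factor raised to the $x$-th power tends to zero; a similar check on $r^{p-2}(r-1)^2/[(d+r-1)^2(r^p+d-1)]<1$ shows that the first bracketed term in $f_p$ also vanishes. Hence $f_p(x)\to \frac{p}{1-p}\cdot(-1)=\frac{p}{p-1}$ as $x\to\infty$, and by Lemma~\ref{lem:f_p(x)} (monotonicity on $[2,\infty)$) we get $f_p(|x|)< \frac{p}{p-1}$ for all admissible $|x|$.

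Combining these, since $\sum_{x\neq 0}|\alpha_x|^2=1$ and $\epsilon_0\le\epsilon$,
\[
S_p\!\left(\mathcal{D}_{\lambda}^{\otimes n}\ket{\psi}\bra{\psi}\right)-S_p^{\min}\!\left(\mathcal{D}_{\lambda}^{\otimes n}\right)<\epsilon_0\,\frac{p}{p-1}+O(\epsilon_0^{3/2})\le \epsilon\,\frac{p}{p-1}+O(\epsilon^{3/2}),
\]
which contradicts the hypothesis~(\ref{eq:False}) and establishes (\ref{eq:False2}). The main technical obstacle is the limit computation of $f_p$ at infinity: verifying rigorously that both the first and second bracketed summands in~(\ref{eq:falphax}) decay geometrically, uniformly for all $r>1$, $d\ge 2$ and $p\ge 2$. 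Everything else is a direct reuse of the machinery built in Theorem~\ref{thm:TER} and Lemmas~\ref{lem:f_p(x)} and~\ref{lem:minRenyi}.
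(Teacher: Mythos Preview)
Your proof is correct and follows essentially the same route as the paper: contrapositive, reuse of the expansion~(\ref{eq:main_Thm8}), and then an upper bound $f_p(|x|)<\frac{p}{p-1}$ combined with $\epsilon_0\le\epsilon$. The paper's own proof invokes only ``the monotonicity of $f_p(|x|)$'' for the strict upper bound and leaves the identification of the supremum implicit; your explicit computation of $\lim_{x\to\infty}f_p(x)=\frac{p}{p-1}$ supplies precisely the detail the paper omits.
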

\begin{proof}
Suppose that
\begin{equation}
\label{eq:tau}
1-\epsilon_1=\max\left\{\left| \left\langle\psi\right|\phi_{1},\cdots,\phi _{n}\rangle\right|^{2}:
			\ket{\phi_{i}} \in \mathbb{C}^d\right\} \ge 1-\epsilon.
\end{equation}
From the same arguments in Theorem~\ref{lemma:STRenyi}, 
we obtain the same equality as in Eq.~(\ref{eq:main_Thm8}).
Then
\begin{align}
	S_p\left(\mathcal{D}_{\lambda}^{\otimes n}\ket{\psi}\bra{\psi}\right)-
		S_p^{\min}\left(\mathcal{D}_{\lambda}^{\otimes n}\right)
			=&  \epsilon_1 \sum_{x \neq 0} |{\alpha_x}|^2 f_p(|x|)+O(\epsilon_1^{2/3})
							\nonumber\\
			<& \epsilon_1\frac{p}{1-p}+O(\epsilon_1^{3/2})
							\nonumber\\
			\le& \epsilon \frac{p}{1-p}+O(\epsilon^{3/2}),
\label{eq:epsilon}	
\end{align}
where the first inequality is obtained due to the monotonicity of $f_p(|x|)$
and the second inequality results from Eq.~(\ref{eq:tau}). 
\end{proof}

\begin{remark}
The coefficient of~$\epsilon$ in Eq.~(\ref{eq:True}) is smaller than 
the coefficient of~$\epsilon$ in Eq.~(\ref{eq:False}),
which means that some gap exists between them 
even though it is close to zero for sufficiently small~$\epsilon$. 
Furthermore, for a sufficiently large $p$, 
the gap can be smaller than the gap for the case of $p=2$ 
as we will see in \S\ref{sec:Poly}.
\end{remark}

\section{An application: A Polygraph Test}
\label{sec:Poly}
In this section, we introduce a polygraph test 
as an application of our main results,
namely Theorems~\ref{lemma:STRenyi} and~\ref{theorem:STRenyi2}.
Let us consider the following protocol 
wherein sender Alice transmits multiple copies of an $n$-qudit state 
through depolarizing channels to receiver Bob.
\renewcommand{\labelenumi}{\arabic{enumi}.}
\begin{enumerate} 
	\item Bob informs Alice of a small enough $\epsilon>0$ chosen as an error bound.
	\item Alice prepares an $n$-qudit pure state 
that is close to a product state with fidelity at least $1-\epsilon$ as in Eq.~(\ref{eq:close})
and sends multiple copies to Bob through depolarizing channels.
	\item Bob estimates its output quantum R\'{e}nyi entropy.
	\item Bob determines whether Alice's preparation satisfies the requirement or not,
		and our results help Bob make the correct decision as discussed below.
		\begin{itemize}
			\item[Accept:]
				If Bob's estimate of the output quantum R\'{e}nyi entropy 
				satisfies Inequality~(\ref{eq:True}) then 
				Alice definitely prepared a correct state according to
				Theorem~\ref{lemma:STRenyi}. 
			\item[Reject:]
				If Bob's estimate of the output quantum R\'{e}nyi entropy 
				satisfies Inequality~(\ref{eq:False}) then
				Theorem~\ref{theorem:STRenyi2} guarantees
				that Alice's preparation fails the requirement.
		\end{itemize}
\end{enumerate}
\begin{remark}
	Some gap exists between the coefficients of~$\epsilon$ in Eq.~(\ref{eq:True}) 
	and of~$\epsilon$ in Eq.~(\ref{eq:False}), 
	which means that Bob cannot detect Alice's lie 
	when neither Eqs.~(\ref{eq:True}) nor~(\ref{eq:False}) holds
	for the output quantum R\'{e}nyi entropy of the state Alice sent. 
	However, the probability that Alice cheats Bob can be forced to be close to zero 
	if Bob chooses small enough~$\epsilon$. 
	Thus, the gap problem can be resolved in this way.
\end{remark}

We note that the above polygraph test can be also realized 
by the original stability theorem~\cite{HM13} in the same way as ours,
as the original stability theorem is essentially equivalent to 
the case of $p=2$ in ours.
However, we show that 
if $p$ is sufficiently large then 
our gap can be smaller than the gap from the original stability as follows.
Let a nonzero~$\epsilon$ be fixed, 
and define the gap function
\begin{equation}
	\operatorname{gap}(p)
		:=\frac{p}{p-1}\left(1-\frac{2(r-1)(r^{p-1}-1)(2r^p+dr+(d-2))}{(r+1)(r^p+(d-1))^2}\right),
\end{equation}
which is the gap between the coefficients of~$\epsilon$ 
in Eqs.~(\ref{eq:True}) and~(\ref{eq:False}). 
We claim 
\begin{equation}
	\operatorname{gap}(2)>\lim_{p \rightarrow \infty}\operatorname{gap}(p), 
\label{eq:gap}
\end{equation}
which is equivalent to the inequality
\begin{equation}
	 (r^2+(d-1))^2(r^2+5r-4) > 4r(r-1)^2(2r^2+dr+(d-2)).
 \label{eq:gap2}
\end{equation}
Moreover, in order to prove the inequality (\ref{eq:gap2})
it is enough to show that
\begin{equation}
	h(r)
		:=r^3-3r^2+(-2d+10)r+(14d-10)
\label{eq:gap3}
\end{equation}
is positive for all $r>1$.
As its derivative is 
\begin{equation}
h'(r)=3r^2-6r+(-2d+10),
\label{eq:h2}
\end{equation}
$h(r)$ is evidently positive if $d$ is 2 or 3. 
If $d\ge4$ then it can be readily shown that
\begin{equation}
h(r)\ge h\left(1+\frac{\sqrt{6d-21}}{3}\right)=\frac{2\sqrt{6d-21}}{3}-2+12d>0
\label{eq:gap4}
\end{equation} 
for all $r>1$.

We have introduced the polygraph test as an application of out main theorem,
and we have proved that the protocol for our stability theorem has a smaller undecidable gap than for the protocol in the original stability theorem.

\section{Conclusions}
\label{sec:Conclusions}
We have shown that the stability theorem of the depolarizing channel holds 
for the output quantum $p$-R\'{e}nyi entropy for $p \ge 2$, 
which was one of the open questions in Ref.~\cite{HM13}. 
 Furthermore, we have also proved that the stability theorem of the depolarizing channel holds for the output von~Neumann entropy ($p=1$), 
and have numerically checked that 
the stability theorem 
holds 
for the several cases $p$ with $1<p<2$.
Therefore, we expect that the stability theorem 
holds 
for all $p \ge 1$, and 
leave this for future work.

As an application of our main results, we have introduced a polygraph test
and have presented its protocol. 
The original stability theorem can be also applied to the polygraph test, 
as the original one is essentially equivalent to our stability theorem
when $p=2$. 
In the protocol, Bob determines 
whether Alice prepares a pure quantum state close to a product state.
However, Bob cannot perfectly decide whether her preparation is proper, 
that is, there is an undecidable gap in which 
he can decide nothing. 
We have shown that the undecidable gap of our protocol 
can be smaller than the original case.
Therefore, our results improve the original stability theorem as well as generalize it.


\acknowledgements{%
We acknowledge discussions with A. W. Harrow and A. Montanaro.
This research was supported by the Basic Science Research Program through the National Research Foundation of Korea funded by the Ministry of Education (NRF-2012R1A1A2003441).
E. B.\ and B. C. S.\ acknowledge financial support from AITF and NSERC,
and B.C.S.\ acknowledges support from China's 1000 Talent Plan.
G.G.\ acknowledges financial support from NSERC.%
}

\bibliography{stability}

\end{document}